\documentclass{IEEEtran}

\usepackage{graphics} 
\usepackage{amsmath} 
\usepackage{amssymb}  
\usepackage{bm}
\usepackage{subfigure}
\usepackage{acronym}
\usepackage{hyperref}
\usepackage{paralist}
\usepackage{float}
\usepackage{color}
\usepackage{multicol}
\usepackage{tikz}

\makeatletter
\hypersetup{colorlinks=true}
\AtBeginDocument{\@ifpackageloaded{hyperref}
  {\def\@linkcolor{blue}
  \def\@anchorcolor{red}
  \def\@citecolor{red}
  \def\@filecolor{red}
  \def\@urlcolor{black}
  \def\@menucolor{red}
  \def\@pagecolor{red}
\begingroup
  \@makeother\`%
  \@makeother\=%
  \edef\x{%
    \edef\noexpand\x{%
      \endgroup
      \noexpand\toks@{%
        \catcode 96=\noexpand\the\catcode`\noexpand\`\relax
        \catcode 61=\noexpand\the\catcode`\noexpand\=\relax
      }%
    }%
    \noexpand\x
  }%
\x
\@makeother\`
\@makeother\=
}{}}
\makeatother

\usepackage{amsthm}


\newtheorem{Theorem}{Theorem}

\newtheorem{Lemma}{Lemma}

\newtheorem{Remark}{Remark}

\newtheorem{Assumption}{Assumption}

\newtheorem{Definition}{Definition}

\def\BibTeX{{\rm B\kern-.05em{\sc i\kern-.025em b}\kern-.08em
    T\kern-.1667em\lower.7ex\hbox{E}\kern-.125emX}}

\begin{document}
\title{Finite-Time Stability of Hybrid Systems: A Multiple Generalized Lyapunov Functions Approach}

\author{Kunal Garg, and Dimitra Panagou, \IEEEmembership{Member, IEEE}
\thanks{
The authors would like to acknowledge the support of the Air Force Office of Scientific Research under award number FA9550-17-1-0284.}
\thanks{The authors are with the Department of Aerospace Engineering, University of Michigan, Ann Arbor, MI, USA; \texttt{\{kgarg, dpanagou\}@umich.edu}.}
}

\maketitle

\begin{abstract}
This paper studies finite-time stability of a class of hybrid systems. We present sufficient conditions in terms of multiple generalized Lyapunov functions for the origin of the hybrid system to be finite-time stable. More specifically, we show that even if the value of the generalized Lyapunov functions increase between consecutive switches, finite-time stability can be guaranteed if the finite-time convergent mode is active long enough. In contrast to earlier work where the Lyapunov functions are required to be decreasing during the continuous flows and non-increasing at the discrete jumps, we allow the generalized Lyapunov functions to increase \emph{both} during the continuous flows and the discrete jumps. As thus, the derived stability results are less conservative compared to the related literature. Numerical example demonstrates the efficacy of the proposed methods. 
\end{abstract}

\begin{IEEEkeywords}
Finite-Time Stability; Hybrid Systems;  Multiple Lyapunov Functions.  
\end{IEEEkeywords}

\section{Introduction}
Hybrid systems exhibit continuous state evolution and discrete state jumps, and therefore can capture the behavior of complex dynamical systems. The introductory paper \cite{antsaklis1998hybrid} provides an overview and merits of hybrid systems. Stability of switched and hybrid systems has been studied extensively; for a detailed presentation of the solution concepts and the notion of stability for hybrid systems, the interested readers are referred to \cite{lygeros2004lecture,goebel2012hybrid}.

The survey paper \cite{teel2014stability} studies Lyapunov stability (LS), Lagrange stability and asymptotic stability (AS) for stochastic hybrid systems (SHS), and provides Lyapunov conditions for stability in probability. The paper also presents open problems on converse results on the stability in probability of SHS. 
The review paper \cite{marty2015elements} discusses sufficient conditions for stability of three classes of hybrid systems, namely, (i) systems modeled on a time scale, (ii) impulsive hereditary systems, and (iii) weakly coupled systems defined on Banach spaces. In \cite{liu2016lyapunov}, the authors study hybrid systems exhibiting delay phenomena such as memory. They establish sufficient conditions for AS using Lyapunov-Razumikhin functions and Lyapunov-Krasovskii functionals. More recently, pointwise AS of hybrid systems is studied in \cite{goebel2016notions}, where the notion of set-valued Lyapunov functions is used to establish sufficient conditions for AS of a closed set. In \cite{teel2013lyapunov}, the authors impose an average dwell-time for the discrete jumps and devise Lyapunov-based sufficient conditions for exponential stability of closed sets; see \cite{goebel2016notions,teel2013lyapunov} for details on the notion of stability of closed sets. 

In contrast to AS, which pertains to convergence as time goes to infinity, finite-time stability (FTS)\footnote{With slight abuse of notation, we use FTS to denote the phrase "finite-time stability" or "finite-time stable", depending on the context.} is a concept that requires convergence of the solution in finite time \cite{bhat2000finite}. 
FTS of switched and hybrid systems has gained popularity in the last few years. The authors in \cite{liu2017finite} consider the problem of designing a controller for a linear switched system under delay and external disturbance with finite- and fixed-time convergence. In \cite{li2013robust}, the authors design a hybrid observer and show finite-time convergence in the presence of unknown, constant bias.  In \cite{nersesov2007finite}, the authors study FTS of nonlinear impulsive dynamical systems, and present sufficient conditions
to guarantee FTS. The work in \cite{li2013robust,nersesov2007finite} considers discrete jumps in the system states in a continuously evolving system, i.e., they consider one model for the continuous dynamics, and one model for the discrete dynamics. In this work, we consider the general case with $N_f$ continuous flows and $N_g$ discrete jump dynamics, where $N_f,N_g$ can be any positive integers. The authors in \cite{li2019finite} present conditions in terms of a common Lyapunov function for FTS of hybrid systems. They require the value of the Lyapunov function to be decreasing during the continuous flow and non-increasing at the discrete jumps. In \cite{zhangfinite}, the authors consider a switched system whose subsystems possess a homogeneous Lyapunov function and are of constant switching intervals. In \cite{orlov2004finite}, the authors introduce the concept of \textit{locally} homogeneous system and show FTS of switched systems with uniformly bounded uncertainties. More recently, \cite{zhang2018finite} studies FTS of homogeneous switched systems by introducing the concept of hybrid homogeneous degree, and relating negative homogeneity with FTS. In \cite{fu2015global}, the authors consider systems in strict-feedback form with positive powers and design a controller as well as a switching law so that the closed-loop system is FTS. In this work, we do not assume that the subsystems of the hybrid system are homogeneous or in strict feedback form, and present conditions for FTS in terms of multiple generalized Lyapunov functions.

In this paper, we consider a general class of hybrid systems under arbitrary switching, and develop sufficient conditions for FTS of its equilibrium. We develop sufficient conditions for FTS of the origin of the hybrid system in terms of multiple generalized Lyapunov functions. We relax the requirement in \cite{li2019finite,li2016results} that the Lyapunov function is non-increasing at the discrete jumps and strictly decreasing during the continuous flow. More specifically, we use generalized Lyapunov functions for each mode, and allow these functions to increase both during the continuous flow and at the discrete jumps, and only require that these increments are bounded. In other words, we allow the hybrid system to have unstable modes. We show that if the origin is stable and if there exists an FTS mode that is active for a sufficiently long cumulative time, then the origin is FTS under arbitrary switching. In our earlier work \cite{garg2019finite}, we have presented conditions for FTS of switched systems. The current paper extends these results to a general class of hybrid systems. To the best of authors' knowledge, this is a first work on FTS of hybrid systems using multiple generalized Lyapunov functions. We summarize in Remark \ref{remark: summary contri} how earlier results on FTS of continuous-time,  switched and hybrid systems are special cases of the results presented in this paper.

The paper is organized as follows: In Section \ref{FTS HS}, we present an overview of FTS followed by conditions for FTS of hybrid systems under arbitrary switching. Section \ref{Simulations} evaluates the performance of the proposed method via simulation results. Our conclusions and thoughts on future work are summarized in Section \ref{Conclusions}.

\section{FTS of Hybrid Systems}\label{FTS HS}

\subsection{Notations}

We denote by $\|\cdot\|$ the Euclidean norm of vector $(\cdot)$, $|\cdot|$ the absolute value if $(\cdot)$ is scalar and the length if $(\cdot)$ is a time interval. The set of non-negative reals is denoted by $\mathbb R_+ = [0,\infty)$, set of non-negative integers by $\mathbb Z_+$ and set of positive integers by $\mathbb N$. We denote by $\textrm{int}(S)$ the interior of the set $S$, and by $t^-$ and $t^+$ the time just before and after the time instant $t$, respectively. 

\vspace{-10pt}
\subsection{Overview of FTS}

Consider the system: 
\begin{equation}\begin{split}\label{ex sys}
\dot y(t) = f(y(t)),
\end{split}\end{equation}
where $y\in \mathbb R^n$, $f: D \rightarrow \mathbb R^n$ is continuous on an open neighborhood $D\subseteq \mathbb R^n$ of the origin and $f(0)=0$. The origin is said to be an FTS equilibrium of \eqref{ex sys} if it is Lyapunov stable and \textit{finite-time convergent}, i.e., for all $y(0) \in \mathcal N \setminus\{0\}$, where $\mathcal N$ is some open neighborhood of the origin, $\lim_{t\to T} y(t)=0$, where $T = T(y(0))<\infty$  \cite{bhat2000finite}.
The authors also presented Lyapunov conditions for FTS of \eqref{ex sys}:

\begin{Theorem}[\cite{bhat2000finite}]\label{FTS Bhat}
Suppose there exists a continuous function $V$: $D \rightarrow \mathbb{R}$ such that the following hold: \\
(i) $V$ is positive definite \\
(ii) There exist real numbers $c>0$ and $\alpha \in (0, 1)$ , and an open neighborhood $\mathcal{V}\subseteq D$ of the origin such that 
\begin{equation}\begin{split} \label{FTS Lyap}
    \dot V(y) + cV(y)^\alpha \leq 0, \; y\in \mathcal{V}\setminus\{0\},
\end{split}\end{equation}
where $\dot V(y) \triangleq (D^+(V\circ\psi^x))(0)$ is the upper-right Dini derivative of $V(y)$. Then the origin is a FTS for \eqref{ex sys}.
\end{Theorem}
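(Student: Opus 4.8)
The plan is to establish the two defining properties of finite-time stability separately: Lyapunov stability first, then finite-time convergence via a scalar comparison argument. For Lyapunov stability, I would first note that since $V$ is positive definite and $\alpha > 0$, we have $c\,V(y)^\alpha \geq 0$ on $\mathcal{V}\setminus\{0\}$, so \eqref{FTS Lyap} yields $\dot V(y) \leq 0$ there. Thus $V$ is a (weak) Lyapunov function, and the classical Lyapunov stability theorem gives Lyapunov stability of the origin; in particular there is a neighborhood $\mathcal N \subseteq \mathcal{V}$ from which trajectories remain in $\mathcal{V}$ for all future time and along which $V(\psi^x(t))$ is non-increasing.

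The heart of the argument is the finite-time convergence estimate. Fix $x = y(0) \in \mathcal N \setminus\{0\}$. As long as the trajectory has not reached the origin, \eqref{FTS Lyap} combined with time-invariance gives the Dini differential inequality $(D^+(V\circ\psi^x))(t) \leq -c\,V(\psi^x(t))^\alpha$. I would compare this against the scalar initial value problem $\dot v = -c\,v^\alpha$, $v(0) = V(y(0)) > 0$, whose solution for $v > 0$ is obtained by separation of variables as $v(t)^{1-\alpha} = V(y(0))^{1-\alpha} - c(1-\alpha)\,t$. This solution decreases monotonically and reaches zero at the finite time $T^\ast = \frac{V(y(0))^{1-\alpha}}{c(1-\alpha)}$. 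By the comparison principle for Dini derivatives, $V(\psi^x(t)) \leq v(t)$ as long as both are positive, so $V(\psi^x(t))$ must reach $0$ at some time $T(y(0)) \leq T^\ast < \infty$.

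Finally, I would convert the vanishing of $V$ into convergence of the state. Since $V$ is continuous and positive definite, $V(\psi^x(T(y(0)))) = 0$ forces $\psi^x(T(y(0))) = 0$; and because the origin is an equilibrium and $V$ is non-increasing along the flow, the trajectory stays at the origin, i.e. $\psi^x(t) = 0$ for all $t \geq T(y(0))$. Together with Lyapunov stability, this is exactly finite-time stability, with the settling-time bound $T(y(0)) \leq \frac{V(y(0))^{1-\alpha}}{c(1-\alpha)}$.

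I expect the main obstacle to be the rigorous justification of the comparison step, because the comparison vector field $v \mapsto -c\,v^\alpha$ is \emph{not} Lipschitz at $v = 0$ — precisely the feature that enables finite-time convergence — so uniqueness of solutions fails there, and one must work with the appropriate (maximal) solution while $v > 0$ and separately handle the continuation past the instant $V$ first hits zero. A secondary technical point is applying the Dini-derivative comparison where $V$ need not be differentiable, ensuring the estimate is used only on the interval during which $\psi^x(t)$ remains in $\mathcal V\setminus\{0\}$, and then patching the behavior afterward.
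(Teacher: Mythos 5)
Your proposal is correct: the paper itself states this theorem without proof, citing \cite{bhat2000finite}, and your argument is essentially the standard proof given there — Lyapunov stability from $\dot V \leq 0$, then the comparison principle against the scalar equation $\dot v = -c\,v^\alpha$, whose solution $v(t)^{1-\alpha} = V(y(0))^{1-\alpha} - c(1-\alpha)t$ vanishes at $T^\ast = \frac{V(y(0))^{1-\alpha}}{c(1-\alpha)}$, yielding the same settling-time bound as in the original reference. Your flagged technical points are the right ones, and the non-Lipschitzness at $v=0$ is benign in the direction needed: restricting to $v \geq 0$, forward uniqueness of the scalar solution through zero holds (non-uniqueness only arises backward in time), so the comparison step goes through as you describe.
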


\subsection{Main result}

In this section, we consider the class of hybrid systems $\mathcal H = \{\mathcal F, \mathcal G , C, D\}$ described as
\begin{equation}\label{hybrid sys}
\begin{split}
    \dot x(t) & = f_{\sigma_f}(x(t)) \quad x(t)\in C,\\
    x(t^+) & = g_{\sigma_g}(x(t)) \quad x(t)\in D,
\end{split}\end{equation}
where $x\in \mathbb R^n$ is the state vector with $x(t_0) = x_0$, $f_{i}(x)\in \mathcal F \triangleq\{f_i\}$ for $i\in \Sigma_f \triangleq\{1,2,\cdots, N_f\}$ is the continuous flow for the system which are allowed on the subset of the state space $C\subset\mathbb R^n$ and $g_{j}(x) \in \mathcal G\triangleq\{g_j\}$ for $j\in \Sigma_g \triangleq\{1,2,\cdots, N_g\}$ defines the discrete behavior, which is allowed on the subset $D\subset\mathbb R^n$. Define $x^+(t) \triangleq x(t^+)$. The switching signals $\sigma_f$ and $\sigma_g$ are assumed to be piecewise constant, right-continuous signals, which can depend upon both state and time. Details about definitions and solution concepts of the hybrid system \eqref{hybrid sys} can be found in \cite{goebel2012hybrid}. 

Denote by $T_{i_k} =  [t_{i_k},t_{i_k+1})$ the interval in which the flow $f_i$ is active for the $k-$th time for $i\in \Sigma_f$ and $k\in \mathbb N$, and $t^d_{j_m}$ as the time when discrete jump $x^+ = g_j(x)$ takes place for the $m-$th time for $j\in \Sigma_g$ and $m\in \mathbb N$. Define $J_{i} = \{t^d_{j_m}\; | t^d_{j_m}\in T_{i_k}, j\in \Sigma_g, m\in \mathbb N\}$ as the set of all time instances when a discrete jump takes place when the continuous flow $f_i$ is active. We assume that the solution $x(t)$ to \eqref{hybrid sys} exists and is maximal. For each interval $T_{i_k}$, define the largest connected sub-interval $\bar T_{i_k}\subset T_{i_k}$, such that there is no discrete jump in $\bar T_{i_k}$, i.e., $\textrm{int}(\bar T_{i_k})\bigcap J_{i_k} = \emptyset$. For example, if $T_{i_1} = [0,1)$ and $J_{i} = \{0.2, 0.4, 0.75\}$, then $\bar T_{i_1} = [0.4, 0.75)$. We make the following assumptions:

\begin{Assumption}\label{assum uniq equib}
The origin is the only equilibrium point for all the continuous flows and discrete jumps, i.e., $f_i(0) = g_j(0) = 0$ for all $f_i\in \mathcal F$ and $g_j\in \mathcal G$.
\end{Assumption}

In this work, per Assumption \ref{assum uniq equib}, we restrict our focus on the case when there is a unique equilibrium of the hybrid system \eqref{hybrid sys}. The case when there exists some $g_j\in \mathcal G$ and a set $\bar D\neq\{0\}$ such that $g_j(x) = 0$ for all $x\in \bar D\subset D$, can be treated by studying the stability property of set $\bar D$; see \cite{li2019finite,li2016results}. 

\begin{Assumption}\label{assum bar T tm}
The time interval $\bar T_{F_k}$ for the FTS mode $F\in \Sigma$ is such that its length satisfies $|\bar T_{F_k}| \geq t_d>0$ for all $k\in \mathbb N$. 
\end{Assumption}

Assumption \ref{assum bar T tm} implies that for the FTS flow $f_F$, in each interval $T_{F_k}$ when the system \eqref{hybrid sys} evolves along the flow $f_F$, there exists a sub-interval $\bar T_{F_k}$ of non-zero length $t_d$ such that there is no discrete jump in the system state during $\bar T_{F_k}$. Let $\bar T_{F_k} = [\bar t_{F_k},\bar t_{F_k+1})$ with $ \bar t_{F_k+1}-\bar t_{F_k}\geq t_d$, and $\{\bar V_{F_1}, \bar V_{F_2}, \cdots, \bar V_{F_p}\}$ and $\{\bar V_{F_1+1}, \bar V_{F_2+1}, \cdots, \bar V_{F_p+1}\}$ be the sequence of the values of the generalized Lyapunov function $V_F$ at the beginning and end of the intervals $\bar T_{i_k}$, respectively. Let $i^0, i^1, \cdots,i^l,\cdots\in\Sigma_f$ be the sequence of modes active on the intervals $[t_{0}, t_{1}), [t_{1},t_{2}), \cdots,[t_l,t_{l+1}),\cdots$, respectively. The conditions for FTS of the origin of \eqref{hybrid sys} in terms of a common Lyapunov function are already given in \cite{li2019finite,li2016results}. In this work, we direct our focus on conditions in terms of multiple generalized Lyapunov functions defined as follows.

\begin{Definition}\label{gen Lyap func}
(\textbf{Generalized Lyapunov Function}): A continuous, positive definite function $V_j:\mathbb R^n\rightarrow\mathbb R_+$ is called a generalized Lyapunov function if there exists a continuous function $\phi:\mathbb R_+\rightarrow \mathbb R_+$ satisfying $\phi(0) = 0$, such that $V_j(x(t)) \leq \phi(V_j(x(t_{j_i})))$ for all $t\in [t_{j_i}, t_{j_i+1})$. 
\end{Definition}

\begin{Definition}\label{class GK}
(\textbf{Class-$\mathcal{GK}$ function}): A function $\alpha:\mathbb R_+\rightarrow\mathbb R_+$ is called a class-$\mathcal{GK}$ function if it is increasing, i.e., for all $x>y\geq 0$, $\alpha(x)>\alpha(y)$, and right continuous at the origin with $\alpha(0) = 0$.
\end{Definition}

\begin{Definition}\label{class GK infty}
(\textbf{Class-$\mathcal{GK}_\infty$ function}): A function $\alpha:\mathbb R_+\rightarrow\mathbb R_+$ is called a class-$\mathcal{GK}_\infty$ function if it is a class-$\mathcal{GK}$ function, and $\lim_{r\to\infty}\alpha(r) = \infty$. 
\end{Definition}

Note that the class-$\mathcal{GK}$ functions have similar composition properties as those of class-$\mathcal K$ functions, i.e., for $\alpha_1, \alpha_2\in \mathcal{GK}$ and $\alpha\in \mathcal{K}$, 
we have:
\begin{itemize}
    \item $\alpha_1\circ \alpha_2\in \mathcal{GK}$ and $\alpha_1+\alpha_2 \in \mathcal{GK}$;
    \item $\alpha\circ\alpha_1\in \mathcal{GK}$, $\alpha_1\circ\alpha\in \mathcal{GK}$ and $\alpha_1 + \alpha \in \mathcal{GK}$.
\end{itemize}


\begin{Lemma}\label{V diff lemma}
If a generalized Lyapunov function $V_F$ satisfies
\begin{equation}\begin{split}\label{v f equiv}
    \sum\limits_{k = 1}^p |V_F(x(t_{F_{k+1}})) - V_F(x(t_{F_k+1}))|\leq \alpha(\|x_0\|),
\end{split}\end{equation}
 for all $p \in \mathbb N$ where $\alpha \in \mathcal{GK}$, then there exists $\bar\alpha\in \mathcal{GK}$ such that the following holds for all $0\leq\beta<1$ and all $p\in \mathbb N$:
{\small
\begin{align}\label{v f actual}
\sum\limits_{k = 1}^p\Big(V_F(x(t_{F_{k+1}}))^{1-\beta} - V_F(x(t_{F_k+1}))^{1-\beta}\Big)\leq \bar\alpha(\|x_0\|).
\end{align}
}
\end{Lemma}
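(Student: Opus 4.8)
The plan is to reduce \eqref{v f actual} to a counting argument: although each term is individually controlled, the sum is governed by the \emph{number} of times the finite-time-stable (FTS) mode $f_F$ is activated, and I will show this number is finite and bounded by a class-$\mathcal{GK}$ function of $\|x_0\|$. To fix notation, write $a_k := V_F(x(t_{F_k+1}))$ and $b_k := V_F(x(t_{F_{k+1}}))$, so $a_k$ is the value of $V_F$ at the end of the $k$-th activation of $f_F$ and $b_k$ its value at the start of the $(k+1)$-th activation; then \eqref{v f equiv} reads $\sum_{k=1}^p |b_k-a_k|\le \alpha(\|x_0\|)$. A purely algebraic attempt, bounding each positive term by the sub-additivity inequality $b_k^{1-\beta}-a_k^{1-\beta}\le (b_k-a_k)^{1-\beta}$, fails, since $\sum_k (b_k-a_k)^{1-\beta}$ need not be bounded uniformly in $p$ when the increments decay slowly; hence the dynamics, and in particular Assumption \ref{assum bar T tm}, must be used to exclude such slow decay.

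First I would establish a uniform bound on the sampled values. Using Definition \ref{gen Lyap func} together with the summability afforded by \eqref{v f equiv}, the value of the generalized Lyapunov function $V_F$ along the trajectory stays bounded by a class-$\mathcal{GK}$ function; that is, there is $R\in\mathcal{GK}$ with $V_F(x(t))\le R(\|x_0\|)$ for all $t$, and in particular $0\le a_k,b_k\le R(\|x_0\|)$ for every $k$.

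The key step is to bound the number of FTS activations before the origin is reached. On each jump-free sub-interval $\bar T_{F_k}$, Assumption \ref{assum bar T tm} gives $|\bar T_{F_k}|\ge t_d$, and along $f_F$ the inequality \eqref{FTS Lyap} of Theorem \ref{FTS Bhat} holds, so that $\frac{d}{dt}V_F^{1-\mu}\le -c(1-\mu)$ for the FTS exponent $\mu\in(0,1)$ appearing in \eqref{FTS Lyap}; integrating over $\bar T_{F_k}$ shows that $V_F^{1-\mu}$ drops by at least $\delta:=c(1-\mu)t_d>0$, unless it reaches $0$ first (after which, by Assumption \ref{assum uniq equib}, the state remains at the origin and every subsequent term of \eqref{v f actual} vanishes). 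Converting this to a drop in $V_F$ itself and using $V_F\le R(\|x_0\|)$, each activation that does not already reach the origin decreases $V_F$ by at least a fixed amount $\underline d>0$. Since the total \emph{increase} of $V_F$ off the FTS mode is at most $\alpha(\|x_0\|)$ by \eqref{v f equiv}, a budget (telescoping) argument bounds the number of activations by $N(\|x_0\|):=(R(\|x_0\|)+\alpha(\|x_0\|))/\underline d$.

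Finally I would assemble the estimate. In \eqref{v f actual} the terms with $b_k\le a_k$ are nonpositive and may be discarded; each of the remaining (at most $N(\|x_0\|)$ nonzero) terms satisfies $b_k^{1-\beta}-a_k^{1-\beta}\le b_k^{1-\beta}\le R(\|x_0\|)^{1-\beta}\le 1+R(\|x_0\|)$, a bound \emph{uniform in} $\beta\in[0,1)$. Summing gives $\sum_{k=1}^p\big(b_k^{1-\beta}-a_k^{1-\beta}\big)\le N(\|x_0\|)\big(1+R(\|x_0\|)\big)=:\bar\alpha(\|x_0\|)$ for all $p$ and all $\beta$, and one verifies that $\bar\alpha$ can be taken in $\mathcal{GK}$ (enlarging it if necessary). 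The main obstacle is the counting step: the natural per-flow decrease supplied by \eqref{FTS Lyap} is a decrease of $V_F^{1-\mu}$, whereas \eqref{v f equiv} controls increments of $V_F$ itself, so one must convert between the two without circularly invoking \eqref{v f actual}; a secondary difficulty is accounting for any discrete jumps interior to $T_{F_k}$ but outside $\bar T_{F_k}$, which I would absorb into the bound $R$ and the increment budget.
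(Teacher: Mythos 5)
Your route is genuinely different from the paper's, and the reason you give for abandoning the ``purely algebraic attempt'' is, as it happens, precisely the weak point of the paper's own proof. The paper proves Lemma \ref{V diff lemma} exactly by the algebraic route you set aside: from $a^r-b^r\le (a-b)^r$ for $a\ge b\ge 0$, $r=1-\beta$, it keeps only the $m$ indices $I_1$ with positive differences and applies the H\"older-type inequality $\sum_{i\in I_1}(a_i-b_i)^r\le m^{1-r}\left(\sum_{i\in I_1}|a_i-b_i|\right)^r$ to conclude with $\bar\alpha(\|x_0\|)=m^{1-r}\alpha(\|x_0\|)^r$. As you anticipated, this bound is not uniform: $m$ can grow with $p$, and the resulting $\bar\alpha$ also depends on $\beta$ through $r$, neither of which the paper's appendix addresses. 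Your counting argument supplies exactly what is missing there: Assumption \ref{assum bar T tm} together with the decrease \eqref{v dot cond} forces, on each jump-free subinterval of length at least $t_d$, either arrival at the origin (after which, by Assumption \ref{assum uniq equib}, all later terms vanish) or a drop of at least $\delta=c(1-\mu)t_d$ in $V_F^{1-\mu}$, hence of at least $\delta^{1/(1-\mu)}$ in $V_F$ itself by convexity of $s\mapsto s^{1/(1-\mu)}$ (for $s\ge\delta$, $s^{q}-(s-\delta)^{q}\ge\delta^{q}$ with $q=1/(1-\mu)$ --- note the upper bound $R$ is not needed for this conversion). The budget argument then bounds the number of positive terms by a class-$\mathcal{GK}$ function of $\|x_0\|$, and the per-term bound $V_F^{1-\beta}\le\max(1,R)\le 1+R$, uniform in $\beta\in[0,1)$, closes the estimate uniformly in both $p$ and $\beta$. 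So your approach buys genuine uniformity --- which the lemma's ``for all $0\le\beta<1$ and all $p$'' actually demands --- at the price of invoking the dynamics.

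That price is the one caveat you must own explicitly: your proof uses hypotheses that Lemma \ref{V diff lemma} does not state. The inequality \eqref{v dot cond} is condition (iv) of Theorem \ref{FT Th 2}, not a hypothesis of the lemma (you cite \eqref{FTS Lyap}, which belongs to Theorem \ref{FTS Bhat}, but the in-context statement is \eqref{v dot cond}); and the uniform bound $R$ does not follow from Definition \ref{gen Lyap func} plus the summability in \eqref{v f equiv} alone --- the paper derives the corresponding bound \eqref{v f a} from conditions (i)--(iii) inside the theorem's proof, and you also need condition (iii) to control the discrete jumps interior to $T_{F_k}$ that you propose to ``absorb,'' so the increment budget should be stated as $R+\alpha+\alpha_3$ rather than waved at. Since the lemma is invoked only inside the proof of Theorem \ref{FT Th 2}, where all of these are in force, your argument suffices for the role the lemma plays in the paper; but as written it proves a contextualized statement, not the lemma as an isolated claim about sequences satisfying \eqref{v f equiv}. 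Indeed, at the sequence level some dynamical input of the kind you used appears unavoidable under the uniform-in-$\beta$ reading: a single pair with end-of-activation value $\epsilon^2$ and next-start value $\epsilon$ has $|\epsilon-\epsilon^2|$ arbitrarily small, yet $\epsilon^{1-\beta}-\epsilon^{2(1-\beta)}$ equals $1/4$ for the $\beta$ with $\epsilon^{1-\beta}=1/2$, so no $\bar\alpha\in\mathcal{GK}$ vanishing at the origin can work uniformly over $\beta$; the paper's algebraic proof sidesteps this only because its $\bar\alpha$ silently depends on $\beta$ and on $m$.
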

\begin{proof}
See Appendix \ref{app lemma 1 pf}.
\end{proof}

We now present our main result for FTS of hybrid systems.

\begin{Theorem}\label{FT Th 2}
The origin of \eqref{hybrid sys} is FTS if there exist generalized Lyapunov functions $V_i(x)$ for each $i\in \Sigma_f$ and the following hold:
\begin{itemize}
 \item[(i)] There exists $\alpha_1 \in \mathcal{GK}$, such that 
\begin{equation}
\begin{split}
   \sum\limits_{k = 0}^{p}\Big(
   V_{i^{k+1}}(x(t_{{k+1}})) -&V_{i^k}(x(t_{{k+1}}))\Big)\leq \alpha_1(\|x_0\|), \label{Hyb cond 1}
\end{split}\end{equation}
holds for all $p\in \mathbb Z_+$;
\item[(ii)] There exists $\alpha_2\in \mathcal{GK}$ such that
\begin{equation}\begin{split}
   \sum\limits_{k = 0}^{p}\Big(
   V_{i^k}(x(t_{{k+1}})) -&V_{i^k}(x(t_{k}))\Big)\leq \alpha_2(\|x_0\|), \label{Hyb cond 2}
\end{split}\end{equation}
holds for all $p\geq 0$;
\item[(iii)] There exists $\alpha_3\in \mathcal{GK}$ such that for all $i\in \Sigma_f$, 
\begin{equation}\begin{split}
    \sum\limits_{t\in J_{i}}\Big(V_i(x^+(t))-V_i(x(t))\Big)& \leq \alpha_3(\|x_0\|); \label{Hyb cond 3}
\end{split}\end{equation}
\item[(iv)] There exists $F\in \Sigma_f$ such that $\dot y = f_F(y)$ is FTS, and there exist a positive definite generalized Lyapunov function $V_F$ and constants $c>0$, $0<\beta<1$ such that 
\begin{equation}\label{v dot cond}
\dot V_F \leq -cV_F^\beta,    
\end{equation}
for all $t\in \bigcup [t_{F_k},t_{F_k+1})\setminus J_F$;
\item[(v)] There exists $\alpha_4\in \mathcal{GK}$ such that for all $p\geq 1$,
\begin{equation}\label{bar v f equiv}
\begin{split}
    \sum\limits_{k = 1}^p |\bar V_{F_{k+1}} - \bar V_{F_k+1}|\leq \alpha_4(\|x_0\|);
\end{split}\end{equation} 
\item[(vi)] The accumulated duration of activation for mode $F$ without any discrete state jump satisfy $\bar T_F = \sum_i|\bar T_{i_k}| = \gamma(\|x_0\|)$ where $\gamma\in \mathcal{GK}$ where $|\bar T_{i_k}|$ is the length of the interval $T_{i_k}$.
\end{itemize}
Moreover, if all the conditions hold globally, the functions $V_i$ are radially unbounded for all $i\in \Sigma_f$, and $\alpha_l\in \mathcal{GK}_\infty$ for $l\in \{1,2,3,4\}$, then the origin of \eqref{hybrid sys} is globally FTS (GFTS). 
\end{Theorem}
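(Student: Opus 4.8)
The plan is to split the statement into Lyapunov stability and finite-time convergence, obtain both as consequences of the bounded-increment conditions (i)--(iii), (v) and the one decrease estimate (iv), and then read off the global version by tracking where the class-$\mathcal{GK}$ bounds can be upgraded to class-$\mathcal{GK}_\infty$. First I would establish Lyapunov stability by controlling the value of the \emph{active} generalized Lyapunov function along the whole trajectory. Writing $i^k$ for the mode on $[t_k,t_{k+1})$, I would express $V_{i^p}(x(t_p))$ as the initial value $V_{i^0}(x_0)$ plus a telescoping sum whose terms are exactly the switching increments summed in \eqref{Hyb cond 1} and the per-interval increments summed in \eqref{Hyb cond 2}; the cross terms $V_{i^k}(x(t_k))$ cancel, leaving $V_{i^p}(x(t_p)) \le V_{i^0}(x_0) + \alpha_1(\|x_0\|) + \alpha_2(\|x_0\|)$. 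To bound $V_{i^p}(x(t))$ at an arbitrary $t$ inside the current interval I would use the generalized Lyapunov property of Definition \ref{gen Lyap func} together with the jump bound \eqref{Hyb cond 3}. This produces $V_{i(t)}(x(t)) \le \rho(\|x_0\|)$ for all $t$ with $\rho\in\mathcal{GK}$, and positive definiteness of the $V_i$ then yields a class-$\mathcal{K}$ bound on $\|x(t)\|$, i.e.\ Lyapunov stability.

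Next, for finite-time convergence I would work entirely with the FTS mode $F$. On each jump-free subinterval $\bar T_{F_k}$ the inequality \eqref{v dot cond} holds, so integrating $\tfrac{d}{dt}V_F^{1-\beta} = (1-\beta)V_F^{-\beta}\dot V_F \le -(1-\beta)c$ gives $\bar V_{F_k+1}^{1-\beta} - \bar V_{F_k}^{1-\beta} \le -(1-\beta)c\,|\bar T_{F_k}|$. Telescoping over the jump-free intervals, I would write the total change of $V_F^{1-\beta}$ as the decrease on the $\bar T_{F_k}$ plus the increments $\bar V_{F_{k+1}}^{1-\beta} - \bar V_{F_k+1}^{1-\beta}$ accrued across the gaps, where other flows and discrete jumps act. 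Applying Lemma \ref{V diff lemma} to the raw-increment bound \eqref{bar v f equiv} converts the latter into a single class-$\mathcal{GK}$ bound $\bar\alpha_4(\|x_0\|)$, so that $\bar V_{F_p+1}^{1-\beta} \le \bar V_{F_1}^{1-\beta} - (1-\beta)c\sum_{k\le p}|\bar T_{F_k}| + \bar\alpha_4(\|x_0\|)$. Using $\bar V_{F_p+1}^{1-\beta}\ge 0$ and the stability bound $\bar V_{F_1}\le\rho(\|x_0\|)$ forces $(1-\beta)c\sum_{k\le p}|\bar T_{F_k}| \le \bar V_{F_1}^{1-\beta} + \bar\alpha_4(\|x_0\|)$. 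Hence $V_F$ must vanish once the accumulated jump-free duration of mode $F$ reaches the class-$\mathcal{GK}$ quantity $\tau^{\ast}(\|x_0\|) = \tfrac{\rho(\|x_0\|)^{1-\beta} + \bar\alpha_4(\|x_0\|)}{(1-\beta)c}$. Since condition (vi) guarantees mode $F$ accumulates total jump-free duration $\gamma(\|x_0\|)$ and Assumption \ref{assum bar T tm} makes each contribution at least $t_d>0$, ``active long enough'' means $\gamma(\|x_0\|)\ge\tau^{\ast}(\|x_0\|)$; then at a finite actual time $T$ the accumulated duration attains $\tau^{\ast}$, giving $V_F=0$ and therefore $x=0$ by positive definiteness, and Assumption \ref{assum uniq equib} keeps the state at the origin thereafter.

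I expect the main obstacle to be the bookkeeping of this telescoping estimate for $V_F^{1-\beta}$: the guaranteed decrease rate $(1-\beta)c$ lives only on the jump-free intervals $\bar T_{F_k}$, while on the complementary intervals $V_F^{1-\beta}$ may grow due to other modes and to discrete jumps, and these growths must be shown to sum to something uniformly bounded in $p$. The decisive step is that Lemma \ref{V diff lemma} transfers the $\ell^1$-type bound on the raw increments of $V_F$ in \eqref{bar v f equiv} to an $\ell^1$-type bound on the increments of $V_F^{1-\beta}$, keeping the total increase a single class-$\mathcal{GK}$ function of $\|x_0\|$ so it is dominated by the linear-in-time decrease. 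For the global claim, radial unboundedness of the $V_i$ and $\alpha_l\in\mathcal{GK}_\infty$ make $\rho$ and hence $\tau^{\ast}$ class-$\mathcal{GK}_\infty$, so the stability estimate and the settling-time bound remain valid for every $x_0\in\mathbb R^n$, yielding GFTS.
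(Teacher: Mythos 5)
Your proposal follows essentially the same route as the paper's proof: the same telescoping of $V_{i^p}(x(t_p))$ under conditions (i)--(iii) to get the bound $V_F \le \alpha(\|x_0\|)$, the same comparison-lemma integration of \eqref{v dot cond} on the jump-free intervals $\bar T_{F_k}$, the same telescoping of $V_F^{1-\beta}$ with Lemma \ref{V diff lemma} converting condition (v) into a bound on the $(1-\beta)$-power increments, and the same threshold argument forcing $\bar V_{F_M+1}=0$ once the accumulated jump-free duration reaches the class-$\mathcal{GK}$ quantity. The only step you compress is the finiteness of the actual convergence time, which the paper settles by noting $M\le\gamma(\|x_0\|)/t_d<\infty$ and arguing by contradiction that $\bar t_{F_M}<\infty$ (else the accumulated duration would stay strictly below $\gamma(\|x_0\|)$, violating condition (vi)) --- your appeal to condition (vi) implicitly contains this same argument.
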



\begin{proof}
Let $x_0\in D$, where $D$ is some open neighborhood of the origin. For all $p\in \mathbb Z_+$, we have that
\begin{align*}
    V_{i^p}(x(t_p)) & = V_{i^0}(x(t_{0})) + \sum\limits_{k = 1}^{p}
   \Big(V_{i^k}(x(t_{k})) -V_{i^{k-1}}(x(t_{k}))\Big)\\
    & \quad + \sum\limits_{k = 0}^{p-1}
   \Big(V_{i^k}(x(t_{{k+1}})) -V_{i^k}(x(t_k))\Big)\\
   & \quad + \sum\limits_{k = 0}^{p}\sum_{t\in J_{k}\bigcap [t_k, t_{k+1})}
   \Big(V_{i^k}(x(t^+)) -V_{i^k}(x(t))\Big)\\
    & \overset{\eqref{Hyb cond 1},\eqref{Hyb cond 2},\eqref{Hyb cond 3}}{\leq} \alpha_0(\|x_0\|)+\alpha_1(\|x_0\|)\\
    & \quad \quad \quad +\alpha_2(\|x_0\|) + N\alpha_3(\|x_0\|),
\end{align*}
where $\alpha_0(r) = \max\limits_{i\in \Sigma, \|x\|\leq r}V_i(x)$. Since $V_i$ are positive definite, $\alpha_0$ is positive definite. Hence, there exists $\alpha\in \mathcal{GK}$ such that $\alpha_0(\|x_0\|)+\alpha_1(\|x_0\|)+\alpha_2(\|x_0\|) + N\alpha_3(\|x_0\|)\leq \alpha (\|x_0\|)$ for all $x_0\in D$. Consequently: 
\begin{align}\label{v f a}
 V_F(x(t_{F_i})) \leq \alpha(\|x_0\|).  
\end{align}

We also know that during $\bar T_{F_k}$, there is no discrete jump for all $k\in \mathbb N$. As per condition (iv), $\dot V_F \leq -cV_F^\beta$ for all $t\in \bigcup\bar T_{F_k}$. Using this and the Comparison lemma, we obtain:
\begin{equation*}\begin{split}
& |\bar T_{F_k}| \leq \frac{\bar V_{F_k}^{1-\beta}}{c((1-\beta)}-\frac{\bar V_{F_k+1}^{1-\beta}}{c(1-\beta)}\\
\implies \sum\limits_{k = 1}^M &|\bar T_{F_k}| \leq \sum\limits_{k = 1}^M\Big(\frac{\bar V_{F_k}^{1-\beta}}{c((1-\beta)}-\frac{\bar V_{F_k+1}^{1-\beta}}{c(1-\beta)}\Big)\\
& = \frac{\bar V_{F_1}^{1-\beta}}{c(1-\beta)} + \sum\limits_{i = 1}^
{M-1}\frac{\bar V_{F_{i+1}}^{1-\beta}-\bar V_{F_i+1}^{1-\beta}}{c(1-\beta)} -\frac{\bar V_{F_M+1}^{1-\beta}}{c(1-\beta)}. 
\end{split}\end{equation*}
Using \eqref{v f a}, we obtain that $\frac{V_{F_1}^{1-\beta}}{c(1-\beta)} \leq \frac{\alpha(\|x_0\|)^{1-\beta}}{c(1-\beta)}$. Define $\gamma_1(\|x_0\|) \triangleq \frac{\alpha(\|x_0\|)^{1-\beta}}{c(1-\beta)}$ so that $\gamma_1\in \mathcal{GK}$. From condition (v) and Lemma \ref{V diff lemma}, there exists $\bar \alpha\in \mathcal{GK}$ such that: 
\begin{align}\label{v f need}
\sum\limits_{i = 1}^{M-1}\frac{\bar V_{F_{i+1}}^{1-\beta}-\bar V_{F_i+1}^{1-\beta}}{c(1-\beta)} \leq \frac{\bar \alpha(\|x_0\|)}{c(1-\beta)}.
\end{align}
Define $\gamma(\|x_0\|) \triangleq \gamma_1(\|x_0\|) + \frac{\bar \alpha(\|x_0\|)}{c(1-\beta)}$ so that we obtain:
\begin{equation*}\begin{split}
   \bar T_F + \frac{\bar V_{F_M+1}^{1-\beta}}{c(1-\beta)} &\leq \frac{\bar V_{F_1}^{1-\beta}}{c(1-\beta)} + \sum\limits_{i = 1}^
{M-1}\frac{\bar V_{F_{i+1}}^{1-\beta}-\bar V_{F_i+1}^{1-\beta}}{c(1-\beta)}\\
& \leq \gamma(\|x_0\|).
\end{split}\end{equation*}
Clearly, $\gamma \in \mathcal{GK}$. Now, with $\bar T_F  = \gamma(\|x_0\|)$, we obtain 
\begin{equation*}\begin{split}
     \bar T_F+\frac{\bar V_{F_M+1}^{1-\beta}}{c(1-\beta)}\leq \gamma(\|x_0\|) =  \bar T_F,
\end{split}\end{equation*}
which implies $\frac{\bar V_{F_M+1}^{1-\beta}}{c(1-\beta)} \leq 0$. But $\bar V_F\geq 0$, which implies $\bar V_{F_M+1} = 0$. Hence, if mode $F$ is active for the accumulated time $\bar T_F$ without any discrete jump in the system state, the value of the function $V_F$ converges to $0$ as $t\rightarrow \bar t_{F_M+1}$. 

From Assumption \ref{assum bar T tm}, $|\bar T_{F_k}|\geq t_d$ for all $k\in \mathbb N$, and hence $Mt_d \leq \sum\limits_{i = 1}^M|\bar T_{F_k}| =  \gamma(\|x_0\|)$, which implies that $M\leq \frac{\gamma(\|x_0\|)}{t_d} <\infty$. Now we show that the time of convergence is finite, i.e., $\bar t_{F_M+1} <\infty$. From the above analysis, we have that if $\sum\limits_{i = 1}^M |\bar T_{F_i}| =  \bar T_F$, then there exists an interval $[\bar t_{F_M},\bar t_{F_M+1})$ such that $\bar V_{F_M+1} = 0$. Since $\bar V_{F_i} \leq \bar\alpha(\|x_0\|)<\infty$, we obtain that
\begin{align*}
  \bar t_{F_M+1}-\bar t_{F_M}& \leq \frac{\bar V_{F_M}^{1-\beta}-\bar V_{F_M+1}^{1-\beta}}{c(1-\beta)}\leq \frac{\bar V_{F_M}^{1-\beta}}{c(1-\beta)} <\infty  
\end{align*}
for all $x_0\in D$. Now, there are two cases possible. If $\bar t_{F_M}<\infty$, then, we obtain that $\bar t_{F_M+1} \leq \bar t_{F_M} + \frac{\bar\alpha(\|x_0\|)^{1-\beta}}{c(1-\beta)}<\infty$ for all $x_0\in D$. If $\bar t_{F_M} = \infty$, we obtain that the time of activation for mode $F$ reads $\bar T_F = \sum\limits_{i = 1}^{M-1}|\bar T_i| <\gamma(\|x_0\|)$, which contradicts condition (vi). Hence, for condition (vi) to hold, it is required that $\bar t_{F_M}<\infty$ and hence, we obtain that $\bar t_{F_M+1}<\infty$. Hence, the trajectories of \eqref{hybrid sys} reach the origin with finite number of active intervals of the continuous flow $f_F$. 

If all the conditions (i)-(vi) hold globally and the functions $V_i$ are radially unbounded, we obtain that $\alpha_0$ is also radially unbounded. Using the fact that $\alpha_1, \alpha_2, \alpha_3,\alpha_4\in \mathcal{GK}_\infty$, we have  $\bar\alpha(\|x_0\|) <\infty$ for all $\|x_0\|<\infty$, and hence, we obtain $t_{F_M+1} \leq t_{F_M} + \frac{\bar\alpha(\|x_0\|)^{1-\beta}}{c(1-\beta)}<\infty$ for all $x_0$, which implies GFTS.  
\end{proof}

\begin{figure}[!htbp]
	\centering
	\includegraphics[width=0.92\columnwidth,clip]{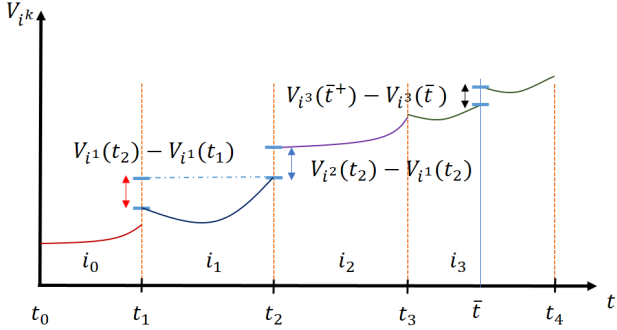}
	\caption{Conditions (i), (ii) and (iii) of Theorem \ref{FT Th 2}: change in the values of the generalized Lyapunov functions. The increment shown by blue, red and black double-arrows pertain to condition (i), (ii) and (iii), respectively. }
	\label{fig:V i incr}
\end{figure}

\begin{figure}[!htbp]
	\centering
	\includegraphics[width=0.92\columnwidth,clip]{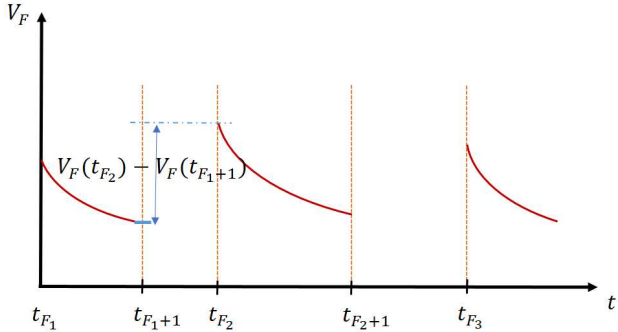}
	\caption{Increment in the value of $V_F$ during the switched-off period (condition (v) of Theorem \ref{FT Th 2}).}
	\label{fig:V F incr}
\end{figure}

\noindent Condition (i) means that at switching instants, the cumulative value of the differences between the consecutive Lyapunov functions is bounded by a class-$\mathcal{GK}$ function; condition (ii) means that the cumulative increment in the value of the individual Lyapunov functions when the respective modes are active, is bounded by a class-$\mathcal{GK}$ function (see Figure \ref{fig:V i incr}). Some authors use the time derivative condition, i.e., $\dot V_i\leq \lambda V_i$ with $\lambda>0$, in place of condition (ii) to allow growth of $V_i$, hence, requiring the function to be continuously differentiable (e.g. \textnormal{\cite{wang2018conditions}}). Our condition allows the use of non-differentiable generalized Lyapunov function; condition (iii) means that the cumulative increment in the value of the generalized Lyapunov function $V_i$ is bounded by a class-$\mathcal{GK}$ function during the discrete jumps; condition (iv) means that there exists an FTS mode $F\in \Sigma$; condition (v) means that the accumulated increment in the value of the Lyapunov function $V_F$ during the "switched-off" periods is bounded (see Figure \ref{fig:V F incr}); and condition (vi) means that the FTS mode $F$ is active for a minimum accumulated time.

\begin{Remark}
In general, the inequality \eqref{v f need} can be difficult to obtain directly. Consider the case when we only know that the mode $F$ is homogeneous, with negative degree of homogeneity. From \cite[Theorem 7.2]{bhat2005geometric}, we know that in this case, condition (iii) of Theorem \ref{FT Th 2} holds for some $\beta$, but its exact value might not be known. In this case, it is not possible to bound the left-hand side (LHS) of \eqref{v f need}. Lemma \ref{V diff lemma} allows one to bound this LHS with a class-$\mathcal{GK}$ function without explicitly knowing the value of $\beta$. \end{Remark}

Compared to \cite{li2019finite,li2016results}, our method is less conservative since we allow generalized Lyapunov functions to increase during the continuous flow (per \eqref{Hyb cond 2}) as well as at the discrete jump (per \eqref{Hyb cond 3}). Also, during the continuous flow the generalized Lyapunov functions are allowed to grow when switching from one continuous flow to another (per \eqref{Hyb cond 1}), whereas the aforementioned work imposes that the common Lyapunov function is always non-increasing. 

We discuss the relation of Assumption \ref{assum bar T tm} with the average dwell-time (ADT) for discrete jumps method that is often used in the literature \cite{teel2013lyapunov}. Discrete jumps under ADT means that in any given interval $[t_1, t_2]$, the number of discrete jumps of the states $N([t_1,t_2])$ satisfies $N([t_1,t_2])\leq N_0 + \delta(t_2-t_1)$, where $\delta\geq 0$ and $N_0\geq 1$. (see \cite{teel2013lyapunov,cai2008smooth}). We note that if the conditions of the ADT method hold for the mode $F$, instead of the minimum dwell-time condition of Assumption \ref{assum bar T tm}, then the FTS result still holds since the parameter $M$ used in the proof of Theorem \textnormal{\ref{FT Th 2}} can be defined as $M = N_0 + \frac{T_F}{\tau_a}$, where $T_F$ is the total time of activation of mode $F$.

So far we have considered the cases when only one of the modes of the system \eqref{hybrid sys} is FTS. Next, we present a result for the case when all the continuous modes of the hybrid system are FTS. 

\begin{Theorem}
The origin of \eqref{hybrid sys} is FTS if the origin of $\dot x = f_i(x)$ is FTS for all $i\in \Sigma_f$ and there exist generalized Lyapunov functions $V_{i}$ for each $i\in \Sigma_f$ such that conditions (i), (iii) and (v) of Theorem \ref{FT Th 2} are satisfied.
\end{Theorem}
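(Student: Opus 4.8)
The plan is to reproduce the architecture of the proof of Theorem \ref{FT Th 2} almost verbatim, exploiting two simplifications that appear once \emph{every} mode is FTS: condition (ii) becomes superfluous, and the role of condition (vi) is taken over by the total continuous duration of the maximal solution. First I would run the same bookkeeping as in the proof of Theorem \ref{FT Th 2} to bound the Lyapunov values: the switching increments are absorbed by condition (i) and the discrete-jump increments by condition (iii), while the continuous-flow portion of each interval now contributes a non-positive increment, since Theorem \ref{FTS Bhat} furnishes, for every mode, an inequality $\dot V_i\le -c_iV_i^{\beta_i}$ (the jump portion being separately accounted by (iii)). Hence the quantity controlled by condition (ii) in Theorem \ref{FT Th 2} is automatically $\le 0$ and (ii) need not be assumed. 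This yields $V_i(x(t))\le\alpha(\|x_0\|)$ with $\alpha\in\mathcal{GK}$, in particular a bound on the value $\bar V_{i_1}$ at the start of the first jump-free sub-interval of each mode.

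Next I would set up a per-mode flow-time budget. For a fixed mode $i$, on each maximal jump-free sub-interval $\bar T_{i_k}$ the FTS inequality $\dot V_i\le -c_iV_i^{\beta_i}$ and the Comparison lemma give $|\bar T_{i_k}|\le(\bar V_{i_k}^{1-\beta_i}-\bar V_{i_k+1}^{1-\beta_i})/(c_i(1-\beta_i))$, exactly as in the displayed estimate in the proof of Theorem \ref{FT Th 2}. Summing over $k$ and telescoping, the cross terms $\bar V_{i_{k+1}}^{1-\beta_i}-\bar V_{i_k+1}^{1-\beta_i}$ record the change of $V_i$ over its switched-off periods; these are bounded by condition (v) combined with Lemma \ref{V diff lemma}, which converts the $V_i$-level bound into the required $V_i^{1-\beta_i}$-level bound, while the leading term is controlled by the previous step. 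This produces $\sum_k|\bar T_{i_k}|\le\gamma_i(\|x_0\|)$ with $\gamma_i\in\mathcal{GK}$, and summing over the finitely many modes gives a total flow-time bound $\bar T=\sum_i\sum_k|\bar T_{i_k}|\le\gamma(\|x_0\|)$ with $\gamma\in\mathcal{GK}$. Treating the budget mode-by-mode, rather than with a single uniform exponent, is what keeps each $\gamma_i$ in $\mathcal{GK}$ even when the $\beta_i$ and $c_i$ differ across modes, their finite sum then being a $\mathcal{GK}$ function.

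To conclude convergence in finite time I would argue as follows. Since the discrete jumps are instantaneous, $\bar T$ equals the total continuous duration of the maximal solution. The Lyapunov functions are bounded by the first step, so no finite escape occurs; if the origin were never reached the solution would flow for unbounded continuous time, contradicting $\bar T\le\gamma(\|x_0\|)<\infty$. Equivalently, retaining the terminal term in the telescoped estimate forces $\bar V_{i_{M_i}+1}^{1-\beta_i}/(c_i(1-\beta_i))\le\gamma_i(\|x_0\|)-\sum_k|\bar T_{i_k}|$, so once the budget is exhausted the terminal value vanishes and positive definiteness yields $x\to 0$. The convergence time is then bounded by $t_0+\gamma(\|x_0\|)<\infty$, exactly as in the last paragraph of the proof of Theorem \ref{FT Th 2}, and the global statement follows verbatim from the radial-unboundedness upgrade given there.

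The main obstacle is the final step. Because condition (vi) is no longer available to \emph{prescribe} the active duration of a distinguished mode, I must instead \emph{derive} convergence from the finiteness of the total flow time together with maximality of the solution, and take care to preclude a Zeno accumulation in which the state neither escapes nor reaches the origin; here the standing assumption that the solution exists and is maximal, combined with the bounded total flow time, is exactly what rules this out. The remaining subtlety is purely organizational: one must track each mode's generalized Lyapunov function over its own sub-intervals so that condition (v) and Lemma \ref{V diff lemma} apply at the correct $V_i^{1-\beta_i}$ level before the per-mode budgets are summed.
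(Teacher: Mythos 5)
Your opening moves coincide with the paper's: the paper likewise observes that once every mode is FTS, condition (ii) of Theorem \ref{FT Th 2} is automatic with $\alpha_2=0$, and it likewise assigns each mode a budget $\gamma_i$. But the paper then closes with a pigeonhole device you have discarded: in any interval $[0,T)$ at least one mode is active without jumps for at least $T/N$, so for $T\geq N\max_i\gamma_i(\|x_0\|)$ some mode accrues its budget, condition (vi) of Theorem \ref{FT Th 2} is thereby satisfied for that mode, and Theorem \ref{FT Th 2} is invoked wholesale. Your substitute for this step — a contradiction from finiteness of the total flow-time budget — contains a genuine gap.

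The gap is the assertion that ``$\bar T$ equals the total continuous duration of the maximal solution.'' Under the paper's definitions it does not: $\bar T_{i_k}$ is only the \emph{largest} connected jump-free sub-interval of the activation interval $T_{i_k}$, and condition (v) — hence Lemma \ref{V diff lemma} and your telescoping — is indexed over exactly these sub-intervals. Your budget therefore bounds $\sum_{i,k}|\bar T_{i_k}|$, not elapsed flow time, and no contradiction arises from a non-convergent trajectory with persistent jumps: for instance, if during a final activation the state jumps every $t_d$ units forever, the solution flows for unbounded time while that activation contributes only a single segment of length $t_d$ to $\bar T$; this scenario is compatible with Assumption \ref{assum bar T tm} and with conditions (i), (iii), (v), and the Lyapunov estimates then give only asymptotic, not finite-time, convergence. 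If instead you reinterpret the $\bar T_{i_k}$ as \emph{all} maximal jump-free segments, so that their union really is the full flow time, then the cross terms in your telescoping between consecutive segments within one activation are jump increments, and these are controlled by condition (iii) only as a \emph{signed} sum, whereas Lemma \ref{V diff lemma} needs an absolute-sum hypothesis and condition (v) as stated does not cover within-activation terms (alternating jumps of $+10,-10,+10,\dots$ satisfy \eqref{Hyb cond 3} while the positive cross terms diverge). This is precisely the hole that condition (vi) exists to close; the paper's contribution in this proof is to \emph{manufacture} (vi) for some mode via the pigeonhole argument rather than to dispense with it, and your argument as written is incomplete without an equivalent device.
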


\begin{proof}
Let $T_i$ denote the total time duration for which mode $i\in \Sigma$ is active without any discrete jump in the state. As the generalized 
Lyapunov function $V_{i}$ satisfies condition (iv) of Theorem \eqref{FT Th 2}, it satisfies the condition (ii) also with $\alpha_2 = 0$. Let $T_F(\|x_0\|) = \max\limits_i \gamma_i(\|x_0\|)$. In any time interval $[0, T)$, at least one mode $i$ satisfies $T_i \geq \frac{T}{N}$. Hence, for $T\geq T_M \triangleq NT_F(\|x_0\|)$, we obtain that there exists at least one mode $i$ 
satisfying $T_i \geq \frac{T}{N} \geq \frac{T_M}{N} = T_F(\|x_0\|)\geq \gamma_i(\|x_0\|)$. This implies that all the conditions of Theorem \ref{FT Th 2} are satisfied. Therefore, we obtain that the origin of \eqref{hybrid sys} is FTS.
\end{proof}

\begin{Remark}\label{remark: summary contri}
For switched systems, i.e., systems with no discrete jump, Theorem \ref{FT Th 2} guarantees FTS of the origin with conditions (i), (ii), (iv), (v) and (vi), since with $D = \emptyset$, condition (iii) is not needed. Hence, the results presented in this paper are applicable to switched systems as well. Furthermore, if $N_f = 1$, i.e., if the system \eqref{hybrid sys} reduces to a continuous-time dynamical system, Theorem \ref{FT Th 2} reduces to Theorem \ref{FTS Bhat}. Thus, the continuous-time result for FTS is a special case of Theorem \ref{FT Th 2}. Finally, some of the results presented in \cite{li2019finite} are special cases of Theorem \ref{FT Th 2} with $V = V_F$ and $\alpha_3 = 0$.
\end{Remark}

\vspace{-10pt}

\section{Simulations}\label{Simulations}
We present a numerical example with number of modes $N = 5$, where the fifth mode is FTS, i.e., $F = 5$. The  system modes are chosen as 
\begin{align*}
    &f_1 = \begin{bmatrix}0.01x_1^2+x_2\\-0.01x_1^3+x_2\end{bmatrix},\quad     f_2 = \begin{bmatrix}0.01x_1-x_2\\-x_1^2+0.01x_2\end{bmatrix},\\
    &f_3 = \begin{bmatrix}-x_1-x_2\\x_1-x_2\end{bmatrix},\quad
    f_4 = \begin{bmatrix}0.01x_1^2+0.01x_1x_2\\-0.01x_1^3+x_2^2\end{bmatrix},\\
    &\qquad \quad f_5 = \begin{bmatrix} x_2-20\textrm{sign}(x_1)|x_1|^\alpha\\-10\textrm{sign}(x_1)|x_1|^{2-2\alpha}\end{bmatrix}.
\end{align*}
The generalized Lyapunov functions are chosen as $V_i(x) = x^TP_ix$ where matrices $P_i$ are chosen as $P_1 = \begin{bmatrix}1 & 0\\ 0 & 1\end{bmatrix},  P_2 = \begin{bmatrix}5 & 2\\ 2 & 4\end{bmatrix}, P_3 = \begin{bmatrix}1 & 0\\ 0 & 3\end{bmatrix}, P_4 = \begin{bmatrix}6 & 1\\ 2 & 3\end{bmatrix}$, and $V_5(x) = \frac{k_2}{2\alpha}|x_1|^{2\alpha} + \frac{1}{2}|x_2|^2$. The discrete jumps are defined by $g = \begin{bmatrix}-1.1x_1 \\-1.1x_2\end{bmatrix}$ so that the states $x_1$ and $x_2$ change their signs and increase in the magnitude at the discrete jumps. Note that this example is more general than the examples considered in \cite{li2019finite}, as we allow the dynamics to have unstable modes. In this case, we allowed the switches in the continuous flows after $0.2$ sec, i.e., $|T_{i_k}| = 0.2$ sec and discrete jumps after $0.1$ sec, so that $t_d = 0.1$ sec, i.e., $|\bar T_{i_k}| = 0.1$, for all $i\in \{1, 2,\cdots, 5\}$ and $k\in \mathbb Z_+$ (see Assumption \ref{assum bar T tm}). Figure \ref{fig:sim hyb mode} shows the switching signal $\sigma_f(t)$.  Figure \ref{fig:sim hyb states} shows the states $x_1(t)$ and $x_2(t)$ for the first 10 seconds of the simulation. The states change sign at the discrete jumps. It can be seen in the figure that the system moves away from the origin while operating in the unstable modes. Figure \ref{fig:sim hyb norm} shows the norm of the state vector $x(t)$ on log scale. It can be seen that the system states reach the origin in finite time. Figure \ref{fig:sim hyb V} shows the evolution of generalized Lyapunov functions $V_i$ with respect to time for first 10 seconds of the simulation. As can be seen from the figure, the generalized Lyapunov functions increase at the time of discrete jumps, the switching instants, as well as during the continuous flows along the unstable modes. 

\begin{figure}[!htbp]
	\centering
	\includegraphics[width=0.92\columnwidth,clip]{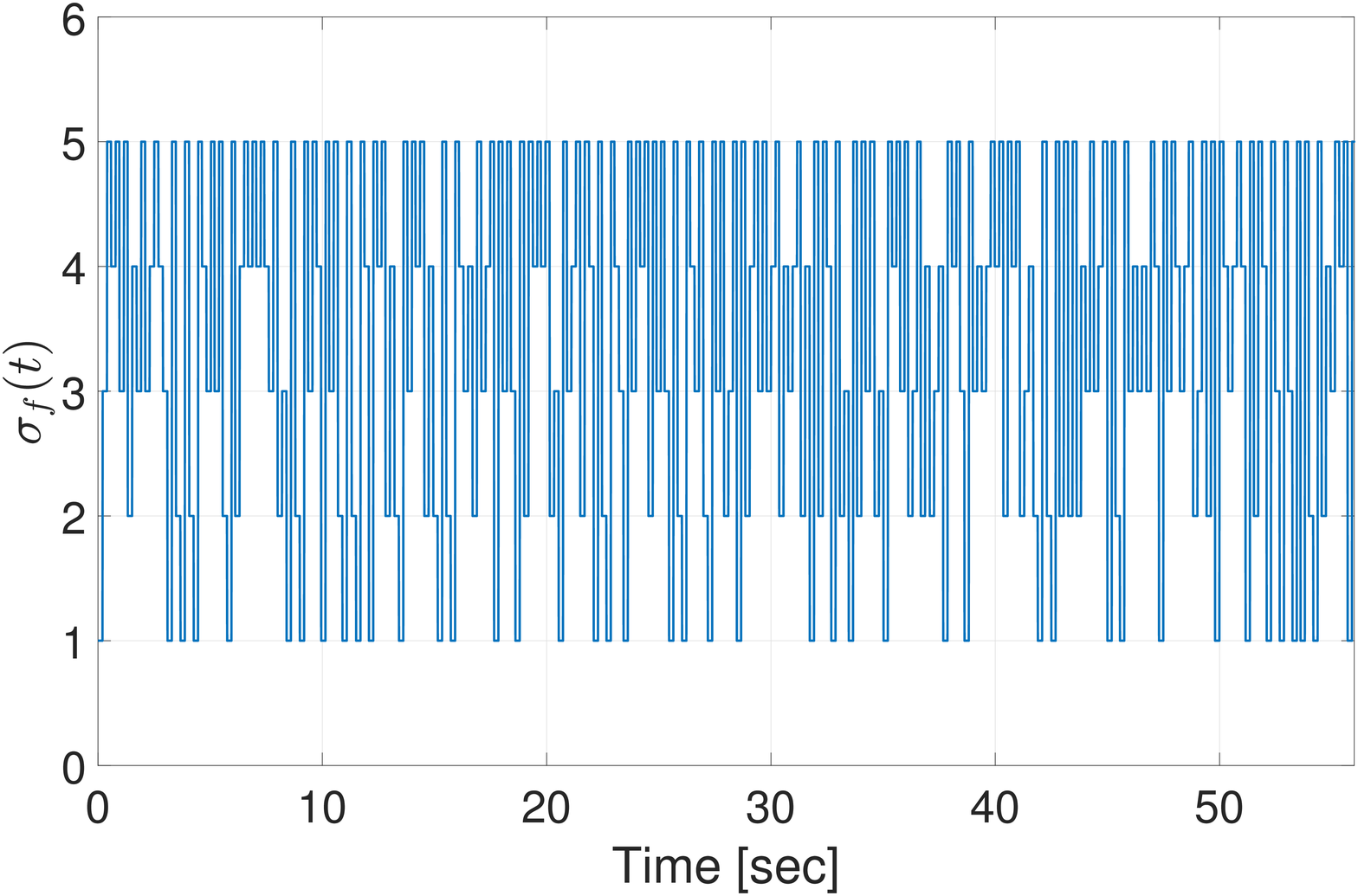}
	\caption{Switching signal $\sigma_f(t)$ for the considered hybrid system.}
	\label{fig:sim hyb mode}

	\centering
	\includegraphics[width=0.92\columnwidth,clip]{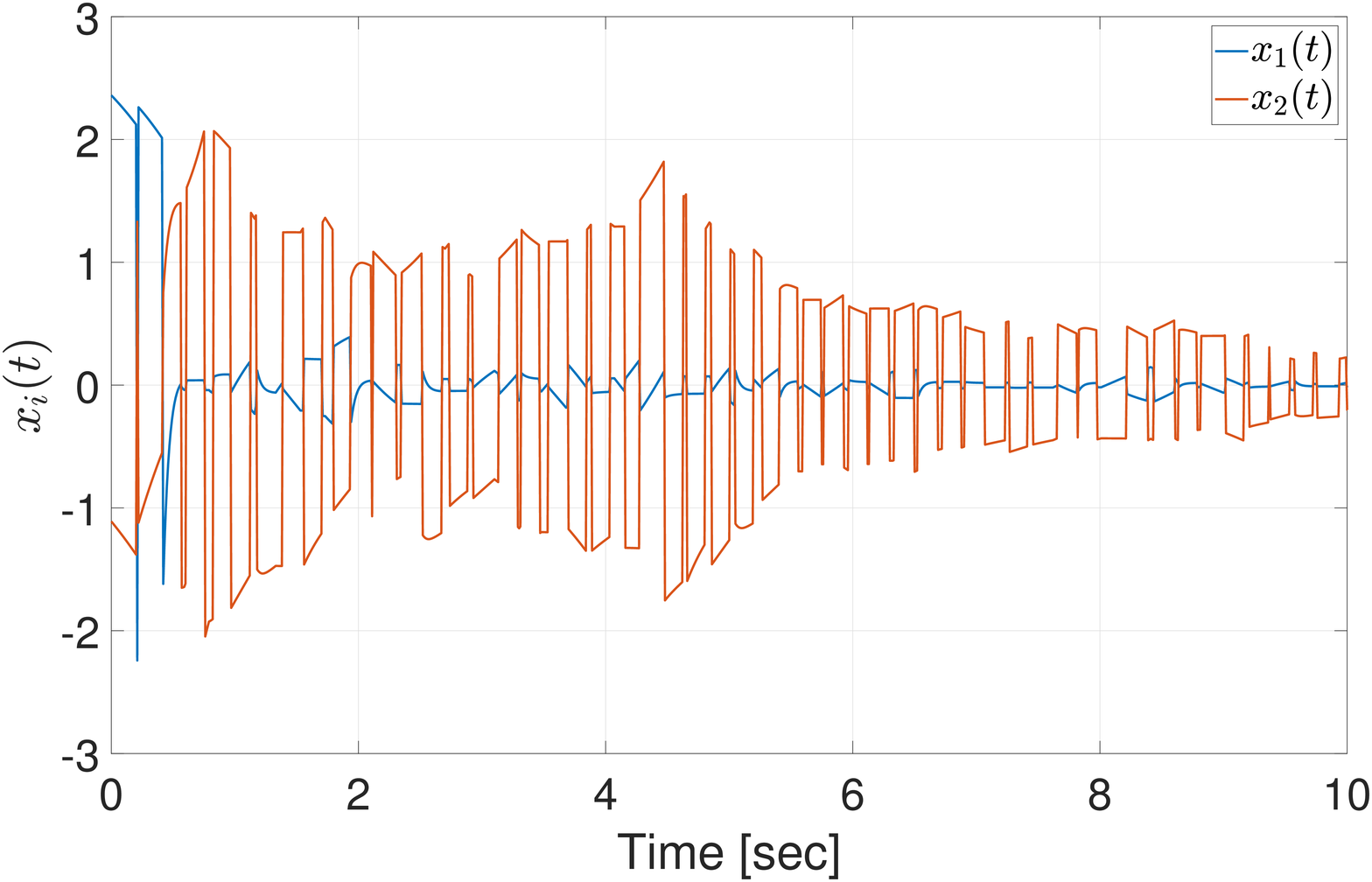}
	\caption{System states $x_1(t)$ and $x_2(t)$ for $t\in [0,10]$ sec. }
	\label{fig:sim hyb states}

	\centering
	\includegraphics[width=0.92\columnwidth,clip]{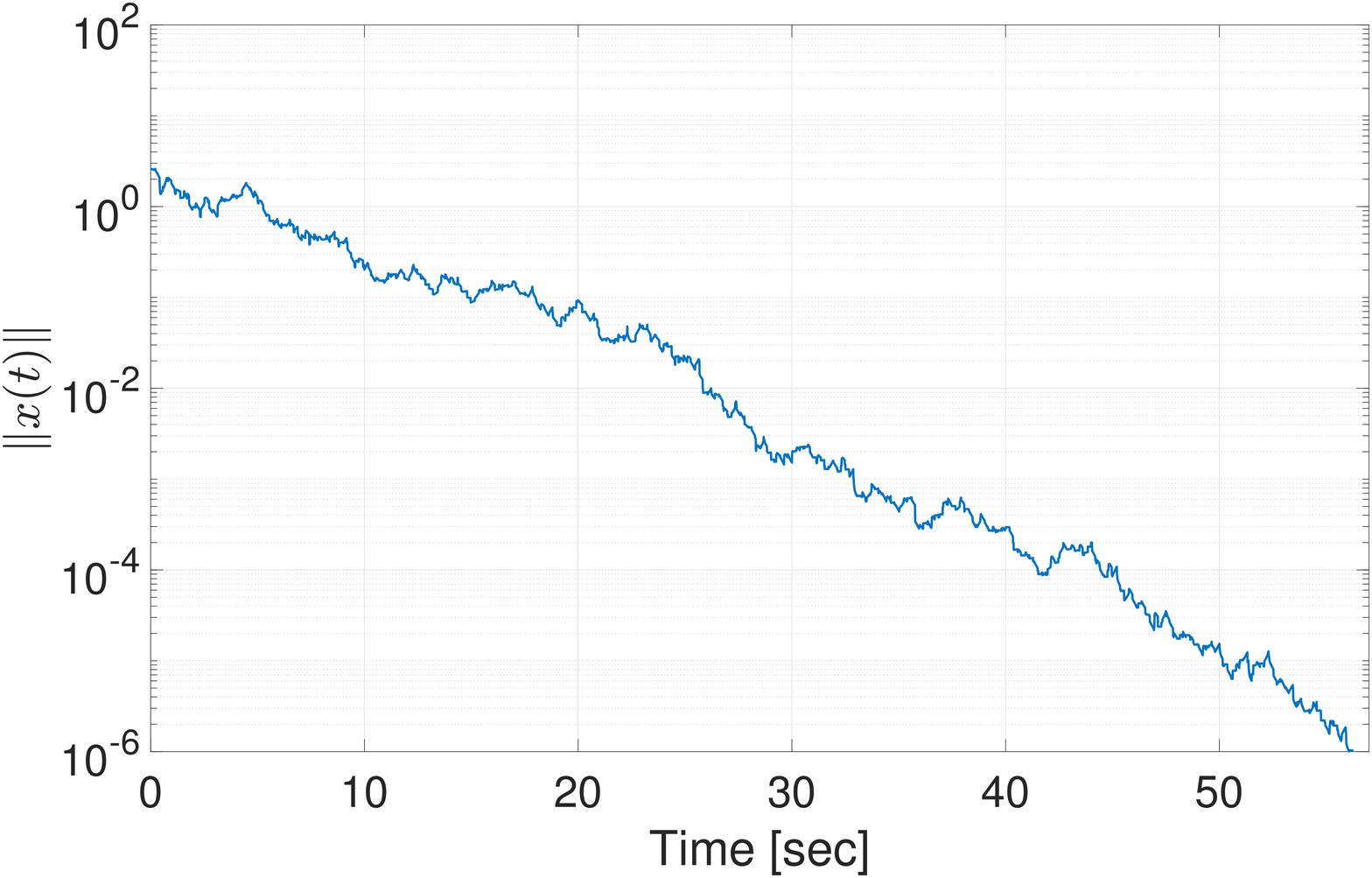}
	\caption{The norm of the state vector $x(t)$ for the considered hybrid system.}
	\label{fig:sim hyb norm}
\end{figure}

\begin{figure}[!htbp]
	\centering
	\includegraphics[width=0.92\columnwidth,clip]{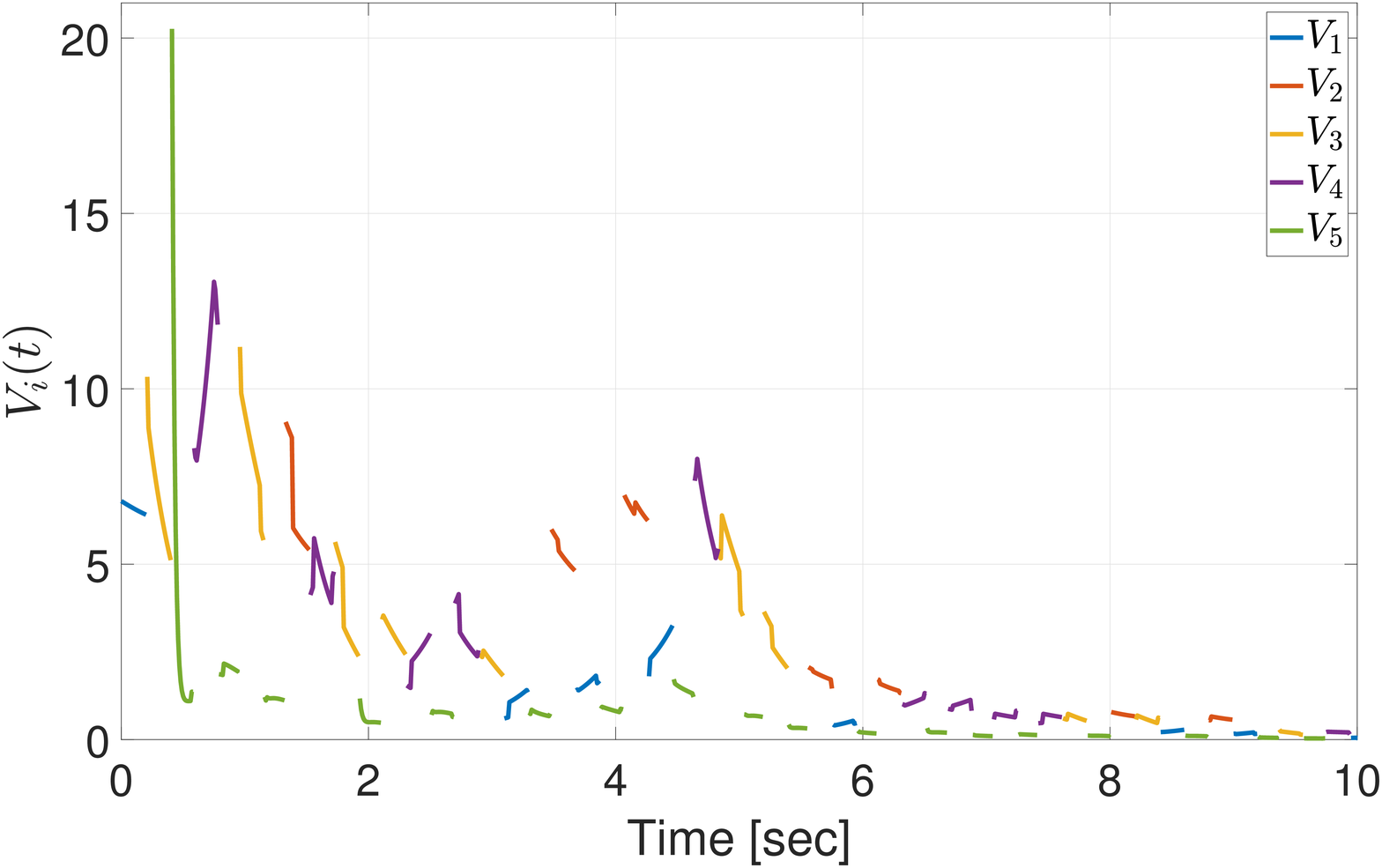}
	\caption{Generalized Lyapunov functions $V_i(t)$ for $t\in [0, 10]$ sec for the considered hybrid system.}
	\label{fig:sim hyb V}
\end{figure}

The provided example validates that the system can achieve FTS even if one or more modes are unstable, if the FTS mode is active for long enough. 

\section{Conclusions and Future Work}\label{Conclusions}
In this paper, we studied FTS of hybrid systems. We showed that under some mild conditions on the bounds on the difference of the values of generalized Lyapunov functions, if the FTS mode is active for a minimum required time, then FTS can be achieved. Our proposed method allows the individual generalized Lyapunov functions to increase both during the continuous flow as well as at the discrete jumps, i.e., it allows for the hybrid system to have unstable modes. 

Traditionally, optimization-based formulations are used to find control inputs that satisfy the Lyapunov conditions. Our ongoing research focuses on incorporating input and state constraints in the hybrid framework for specifications involving both spatial and temporal requirements. More specifically, we are investigating how to incorporate prescribed-time, rather than merely finite-time, convergence for the system modes, so that the overall framework can be used in the synthesis and analysis of controllers with spatiotemporal specifications. 


\vspace{-10pt}
\appendices

\section{Proof of Lemma 1}
\begin{proof}\label{app lemma 1 pf}
Let the function $V_F$ satisfy \eqref{v f equiv} for all $k\mathbb N$, where $\alpha\in \mathcal{GK}$. Lemma 3.3, 3.4 of \cite{zuo2016distributed} establish the following set of inequalities for $z_i\geq 0$ and $0<r\leq 1$
\begin{align}
    \left(\sum_{i = 1}^M z_i \right)^r \leq \sum_{i = 1}^M z_i^r \leq M^{1-r} \left(\sum_{i = 1}^M z_i \right)^r.
\end{align}
Hence, we have that for $a\geq b\geq 0$ and $0<r\leq 1$, $ a^r = (b + (a-b))^r \leq b^r  + (a-b)^r$, or equivalently, 
\begin{equation}\begin{split}
    a^r-b^r \leq (a-b)^r.
\end{split}\end{equation}
Define $a_i \triangleq V_F(x(t_{F_{i+1}}))$ and $b_i \triangleq V_F(x(t_{F_i+1}))$ so that whenever $a_i\geq b_i$, or, $V_F(x(t_{F_{i+1}})) \geq V_F(x(t_{F_i+1}))$, we have that $a_i^r-b_i^r \leq (a_i-b_i)^r$. Denote $i \in I_1 \triangleq \{i_1,i_2,\cdots, i_m\}$ for some $m\leq k$, for which $a_i\geq b_i$ and $i\in I_2\triangleq \{i_{m+1}, i_{m+2},\cdots, i_k\}$ for which $a_i\leq b_i$ (or equivalently, $a_i^r-b_i^r\leq 0$). Hence, we have that for any $0<r\leq 1$, 
\begin{equation*}\begin{split}
    \sum_{i = 1}^k (a_i^r-b_i^r) & \leq \sum_{i\in I_1}(a_i^r-b_i^r) \leq \sum_{i\in I_1}(a_i-b_i)^r\\
    & \leq  m^{1-r}\left(\sum_{i\in I_1}(|a_i-b_i|)\right)^r.
\end{split}\end{equation*}

Using this, we obtain that 

\begin{equation*}\begin{split}
  & \sum_{i = 1}^k ( V_F(x(t_{F_{i+1}}))^r-V_F(x(t_{F_i+1}))^r)\\
  \leq & m^{1-r}\left(\sum_{i \in I_1} (| V_F(x(t_{F_{i+1}}))-V_F(x(t_{F_i+1}))|)\right)^r\\
  \leq & m^{1-r}\left(\sum_{i = 1}^k | V_F(x(t_{F_{i+1}}))-V_F(x(t_{F_i+1}))|\right)^r\\
  \leq & m^{1-r}\alpha(\|x(0)\|)^r \triangleq \bar \alpha(\|x(0)\|)
\end{split}\end{equation*}

where $\bar \alpha$ is clearly a class $\mathcal{GK}$ function. Hence, we obtain that \eqref{v f actual} holds for all $0<\beta = 1-r<1$.
\end{proof}

\bibliographystyle{IEEEtran}
\bibliography{myreferences}

\begin{thebibliography}{10}
\providecommand{\url}[1]{#1}
\csname url@samestyle\endcsname
\providecommand{\newblock}{\relax}
\providecommand{\bibinfo}[2]{#2}
\providecommand{\BIBentrySTDinterwordspacing}{\spaceskip=0pt\relax}
\providecommand{\BIBentryALTinterwordstretchfactor}{4}
\providecommand{\BIBentryALTinterwordspacing}{\spaceskip=\fontdimen2\font plus
\BIBentryALTinterwordstretchfactor\fontdimen3\font minus
  \fontdimen4\font\relax}
\providecommand{\BIBforeignlanguage}[2]{{%
\expandafter\ifx\csname l@#1\endcsname\relax
\typeout{** WARNING: IEEEtran.bst: No hyphenation pattern has been}%
\typeout{** loaded for the language `#1'. Using the pattern for}%
\typeout{** the default language instead.}%
\else
\language=\csname l@#1\endcsname
\fi
#2}}
\providecommand{\BIBdecl}{\relax}
\BIBdecl

\bibitem{antsaklis1998hybrid}
P.~J. Antsaklis and A.~Nerode, ``Hybrid control systems: An introductory
  discussion to the special issue,'' \emph{IEEE Transactions on Automatic
  Control}, vol.~43, no.~4, pp. 457--460, 1998.

\bibitem{lygeros2004lecture}
J.~Lygeros, ``Lecture notes on hybrid systems,'' in \emph{Notes for an ENSIETA
  workshop}.\hskip 1em plus 0.5em minus 0.4em\relax Citeseer, 2004.

\bibitem{goebel2012hybrid}
R.~Goebel, R.~G. Sanfelice, and A.~R. Teel, \emph{Hybrid Dynamical Systems:
  modeling, stability, and robustness}.\hskip 1em plus 0.5em minus 0.4em\relax
  Princeton University Press, 2012.

\bibitem{teel2014stability}
A.~R. Teel, A.~Subbaraman, and A.~Sferlazza, ``Stability analysis for
  stochastic hybrid systems: A survey,'' \emph{Automatica}, vol.~50, no.~10,
  pp. 2435--2456, 2014.

\bibitem{marty2015elements}
A.~A. Martynyuk, ``Elements of the theory of stability of hybrid systems
  (review),'' \emph{International Applied Mechanics}, vol.~51, no.~3, pp.
  243--302, 2015.

\bibitem{liu2016lyapunov}
J.~Liu and A.~R. Teel, ``Lyapunov-based sufficient conditions for stability of
  hybrid systems with memory,'' \emph{IEEE Transactions on Automatic Control},
  vol.~61, no.~4, pp. 1057--1062, 2016.

\bibitem{goebel2016notions}
R.~K. Goebel and R.~G. Sanfelice, ``Notions and sufficient conditions for
  pointwise asymptotic stability in hybrid systems,'' \emph{IFAC-PapersOnLine},
  vol.~49, no.~18, pp. 140--145, 2016.

\bibitem{teel2013lyapunov}
A.~R. Teel, F.~Forni, and L.~Zaccarian, ``Lyapunov-based sufficient conditions
  for exponential stability in hybrid systems,'' \emph{IEEE Transactions on
  Automatic Control}, vol.~58, no.~6, pp. 1591--1596, 2013.

\bibitem{bhat2000finite}
S.~P. Bhat and D.~S. Bernstein, ``Finite-time stability of continuous
  autonomous systems,'' \emph{SICON}, vol.~38, no.~3, pp. 751--766, 2000.

\bibitem{liu2017finite}
X.~Liu, D.~W. Ho, Q.~Song, and J.~Cao, ``Finite-/fixed-time robust
  stabilization of switched discontinuous systems with disturbances,''
  \emph{Nonlinear Dynamics}, vol.~90, no.~3, pp. 2057--2068, 2017.

\bibitem{li2013robust}
Y.~Li and R.~G. Sanfelice, ``A robust finite-time convergent hybrid observer
  for linear systems,'' in \emph{CDC}.\hskip 1em plus 0.5em minus 0.4em\relax
  IEEE, 2013, pp. 3349--3354.

\bibitem{nersesov2007finite}
S.~G. Nersesov and W.~M. Haddad, ``Finite-time stabilization of nonlinear
  impulsive dynamical systems,'' in \emph{Control Conference (ECC), 2007
  European}.\hskip 1em plus 0.5em minus 0.4em\relax IEEE, 2007, pp. 91--98.

\bibitem{li2019finite}
Y.~Li and R.~G. Sanfelice, ``Finite time stability of sets for hybrid dynamical
  systems,'' \emph{Automatica}, vol. 100, pp. 200--211, 2019.

\bibitem{zhangfinite}
\BIBentryALTinterwordspacing
B.~Zhang, ``On finite-time stability of switched systems with hybrid
  homogeneous degrees,'' \emph{Mathematical Problems in Engineering}, 2018, in
  press. [Online]. Available:
  \url{https://www.hindawi.com/journals/mpe/aip/3096986/}
\BIBentrySTDinterwordspacing

\bibitem{orlov2004finite}
Y.~Orlov, ``Finite time stability and robust control synthesis of uncertain
  switched systems,'' \emph{SIAM Journal on Control and Optimization}, vol.~43,
  no.~4, pp. 1253--1271, 2004.

\bibitem{zhang2018finite}
B.~Zhang, ``On finite-time stability of switched systems with hybrid
  homogeneous degrees,'' \emph{Mathematical Problems in Engineering}, vol.
  2018, 2018.

\bibitem{fu2015global}
J.~Fu, R.~Ma, and T.~Chai, ``Global finite-time stabilization of a class of
  switched nonlinear systems with the powers of positive odd rational
  numbers,'' \emph{Automatica}, vol.~54, pp. 360--373, 2015.

\bibitem{li2016results}
Y.~Li and R.~G. Sanfelice, ``Results on finite time stability for a class of
  hybrid systems,'' in \emph{American Control Conference (ACC), 2016}.\hskip
  1em plus 0.5em minus 0.4em\relax IEEE, 2016, pp. 4263--4268.

\bibitem{garg2019finite}
K.~Garg and D.~Panagou, ``Finite-time stability of switched and hybrid
  systems,'' \emph{arXiv preprint arXiv:1901.08513}, 2019.

\bibitem{wang2018conditions}
Y.-E. Wang, H.~R. Karimi, and D.~Wu, ``Conditions for the stability of switched
  systems containing unstable subsystems,'' \emph{IEEE Transactions on Circuits
  and Systems II: Express Briefs}, 2018.

\bibitem{bhat2005geometric}
S.~P. Bhat and D.~S. Bernstein, ``Geometric homogeneity with applications to
  finite-time stability,'' \emph{Mathematics of Control, Signals, and Systems
  (MCSS)}, vol.~17, no.~2, pp. 101--127, 2005.

\bibitem{cai2008smooth}
C.~Cai, A.~R. Teel, and R.~Goebel, ``Smooth lyapunov functions for hybrid
  systems part ii:(pre) asymptotically stable compact sets,'' \emph{IEEE
  Transactions on Automatic Control}, vol.~53, no.~3, pp. 734--748, 2008.

\bibitem{zuo2016distributed}
Z.~Zuo and L.~Tie, ``Distributed robust finite-time nonlinear consensus
  protocols for multi-agent systems,'' \emph{International Journal of Systems
  Science}, vol.~47, no.~6, pp. 1366--1375, 2016.

\end{thebibliography}

\end{document}